\begin{document}
\newcommand{\abs}[1]{\left\vert#1\right\vert}
\newcommand{\set}[1]{\left\{#1\right\}}
\newcommand{\eps}{\varepsilon}
\newcommand{\To}{\rightarrow}
\newcommand{\inv}{^{-1}}
\newcommand{\ihat}{\hat{\imath}}
\newcommand{\var}{\mbox{Var}}
\newcommand{\sd}{\mbox{SD}}
\newcommand{\cov}{\mbox{Cov}}
\newcommand{\f}{\frac}
\newcommand{\fI}[1]{\frac{1}{#1}}
\newcommand{\what}[1]{\widehat{#1}}
\newcommand{\hhat}[1]{\what{\what{#1}}}
\newcommand{\wtilde}[1]{\widetilde{#1}}
\newcommand{\bdot}{\bm{\cdot}}
\newcommand{\Th}{\theta}
\newcommand{\qmq}[1]{\quad\mbox{#1}\quad}
\newcommand{\qm}[1]{\quad\mbox{#1}}
\newcommand{\mq}[1]{\mbox{#1}\quad}
\newcommand{\tr}{\mbox{tr}}
\newcommand{\logit}{\mbox{logit}}
\newcommand{\noi}{\noindent}
\newcommand{\bni}{\bigskip\noindent}
\newcommand{\bul}{$\bullet$ }
\newcommand{\bias}{\mbox{bias}}
\newcommand{\conv}{\mbox{conv}}
\newcommand{\spn}{\mbox{span}}
\newcommand{\colspace}{\mbox{colspace}}
\newcommand{\mA}{\mathcal{A}}
\newcommand{\mD}{\mathcal{D}}
\newcommand{\mF}{\mathcal{F}}
\newcommand{\mH}{\mathcal{H}}
\newcommand{\mI}{\mathcal{I}}
\newcommand{\mN}{\mathcal{N}}
\newcommand{\mR}{\mathcal{R}}
\newcommand{\mT}{\mathcal{T}}
\newcommand{\mX}{\mathcal{X}}
\newcommand{\bbR}{\mathbb{R}}
\newcommand{\vphi}{\varphi}

% footnote symbols
\renewcommand*{\thefootnote}{\fnsymbol{footnote}}

\newtheorem{theorem}{Theorem}[section]
\newtheorem{corollary}{Corollary}[section]
\newtheorem{conjecture}{Conjecture}[section]
\newtheorem{proposition}{Proposition}[section]
\newtheorem{lemma}{Lemma}[section]
\newtheorem{definition}{Definition}[section]
\newtheorem{example}{Example}[section]
\newtheorem{remark}{Remark}[section]

%roman numbers
\makeatletter
\newcommand{\romannum}[1]{\romannumeral #1}
\newcommand{\Romannum}[1]{\expandafter\@slowromancap\romannumeral #1@}
\makeatother

%draw flowchart
\tikzstyle{decision} = [diamond, draw, text width=5em, text badly centered, node distance=3cm, inner sep=0pt]
\tikzstyle{block} = [rectangle, draw, text width=10em, text centered, rounded corners, minimum height=2em]
\tikzstyle{line} = [draw, -latex']

\title{{\bf\Large A Rejection Principle for Sequential Tests of Multiple Hypotheses Controlling Familywise Error Rates}}

\author{\textsc{Jay Bartroff\footnote{Department of Mathematics, University of Southern California, Los Angeles, California, USA. Email: \texttt{bartroff@usc.edu}.} and Jinlin Song\footnote{Analysis Group, Inc., Boston, Massachusetts.}}}
\footnotetext{MSC 2010 subject classifications. Primary-62J15; secondary-62L10. Key words and phrases: closed testing, multiple comparisons, multiple testing, sequential analysis, sequential hypothesis testing, streaming data, testing in order. } 

\date{}
\maketitle

\abstract{We present a unifying approach to multiple testing  procedures for sequential (or streaming) data by giving sufficient conditions for a sequential multiple testing procedure to control the familywise error rate (FWER). Together we call these conditions a ``rejection 
principle for sequential tests,'' which we then apply to some existing sequential multiple testing procedures to give simplified understanding of their FWER control. Next the principle is applied to derive two new sequential multiple testing procedures with
provable FWER control, one for testing hypotheses in order and another for
closed testing. Examples of these new procedures are given by
applying them to a chromosome aberration data set and to finding the
maximum safe dose of a treatment.} 

\section{Introduction and background}\label{sec:intro}

The need for multiple-comparison-type corrections due to testing more than
one null hypothesis occurs in nearly all areas of scientific inquiry in
which statistical hypothesis testing is employed. In a number of these
areas, the data is inherently sequential\footnote[3]{The term
``sequential'' works overtime in the statistics literature, being employed
to describe both the sequential analysis of data
\citep[e.g.,][]{Siegmund85} as well as the step-wise analysis of
fixed sample size test statistics that occurs in many
multiple testing procedures \citep[e.g.,][]{Goeman10}.  For clarity,
herein we reserve the term ``sequential'' to describe the former and we
use ``fixed sample size'' when we want to emphasize the latter.}, or
``streaming,'' such as in multiple endpoint (or multi-arm) clinical trials
\citep[][Chapter~15]{Jennison00}, multi-channel changepoint detection
\citep{Tartakovsky03} and its applications to biosurveillance
\citep{Mei10}, genetics and genomics \citep[e.g.,][]{Salzman11}, and
acceptance sampling with multiple criteria \citep{Baillie87}. Adopting the
familywise error rate (FWER) metric, this paper takes a unifying approach
to sequential multiple testing procedures that control the FWER.  
Specifically, we give sufficient conditions for a sequential multiple
testing procedure to control the FWER, which turn out to be much simpler
and easier to verify in many cases than comprehensive analysis of the
procedure. We call these two sufficient conditions, given in Theorem~\ref{thm:rej.princ}, a \emph{rejection principle
for sequential tests}, following and extending the seminal work of
\citet{Goeman10} who accomplished this for fixed sample size procedures and in
turn extended and unified the work of \citet{Romano05}, \citet{Hommel07},
and \citet{Marcus76}.  Two overlapping aspects of the problem that we must
deal with in the sequential setting
 that were absent from the fixed sample size setting are how to allow for acceptances of hypotheses as well as rejections, and the interplay of sequential sampling with the accept/reject decisions. In the fixed sample size 
setting, rejecting a hypothesis is equivalent to not accepting it, however
in the sequential setting these are not necessarily equivalent because there is the
third possibility of performing additional sampling. These aspects are dealt with by expanding the notion of a procedure's rejection function, introduced in Section~\ref{sec:theorem}, to incorporate not just the already rejected hypotheses as in Goeman and Solari's \citeyearpar{Goeman10} fixed sample size setting, but also the already accepted hypotheses as well as the current sample size of those data streams that are still being sampled. 

The rest of the paper is organized as follows.  After briefly reviewing the relevant background in the next paragraph, our rejection principle is introduced in Section~\ref{sec:theorem} and its sufficiency for FWER control is established in Theorem~\ref{thm:rej.princ}. In Section~\ref{sec:apps} we apply our rejection principle to derive sequential multiple testing procedures, first deriving two general procedures that do not assume a special structure among the hypotheses that control the FWER and both type I and II FWERs (defined below), respectively.  Then our rejection principle is applied to derive sequential procedures for two settings wherein special structure of the hypothesis is known: testing hypotheses in order \citep{Rosenbaum08}  and closed testing \citep{Marcus76}. In Section~\ref{sec:ex} we give examples of these derived procedures, first applying the sequential procedure for testing hypotheses in order to real data from a study \citep{Masjedi00} of  chromosome aberration effects of an anti-tuberculosis drug, and then applying the closed testing procedure to finding the maximum safe dose of a treatment, wherein the sequential procedure is evaluated in a simulation study. Section~\ref{sec:disc} provides a summary and discussion.

Separately, multiple testing and sequential testing are both quite mature fields, the former dating back to classical multiple comparison procedures of \citet{Fisher32}, \citet{Scheffe53}, Tukey, and others \citep[see][]{Seber03} for testing hypotheses about parameter vectors in linear models. Work on sequential hypothesis testing dates back to Wald's \citeyearpar{Wald47} invention of sequential analysis following World War~II; see \citet{Siegmund85} for a summary of the major developments.  However, the intersection of these two areas is less well-developed in a general setting.  One area that has been considered is the adaptation of some classical fixed sample size tests about vector parameters, such as those mentioned above, to the sequential sampling setting, including O'Brien and Fleming's~\citeyearpar{OBrien79} sequential version of Pearson's $\chi^2$ test, and Tang et al.'s \citeyearpar{Tang89,Tang93} group sequential extensions of O'Brien's \citeyearpar{OBrien84} generalized least squares statistic. For bivariate normal populations, \citet{Jennison93} proposed a sequential test of two one-sided hypotheses about the bivariate mean vector, and  \citet{Cook94b} proposed a sequential test in  a similar setting but where one of the hypotheses is two-sided. A procedure for comparing three treatments was proposed by \citet{Siegmund93}, related to Paulson's \citeyearpar{Paulson64} earlier procedure for selecting the largest mean of $k$ normal distributions, which \citet{Bartroff10e} showed to be a special case of their more general sequential step-down method; this procedure is presented in Section~\ref{sec:BL} where we give a simplified proof of its FWER control using our rejection principle.  Recently \citet{Ye13} proposed a group sequential Holm procedure that is also a special case of the procedure of \citet{Bartroff10e}, which allows arbitrary sampling schemes in addition to group sequential. The first sequential procedures to simultaneously control both the type~I and II FWERs were introduced by \citet{De12,De12b}.  \citet{Bartroff14b} propose a different approach to sequential control of both type~I and II FWERs, and their procedure will also be discussed in Section~\ref{sec:I&II} and shown to satisfy this rejection principle.

\section{A rejection principle for sequential tests}
\label{sec:theorem}
We present a general framework for testing multiple hypotheses with sequential data, i.e., with data streams. Assume that there are $k\ge 2$ data streams \begin{equation}\label{streams}
X_1^{(j)},X_2^{(j)},\ldots,\qmq{for}j=1,\ldots,k.
\end{equation} In general we make no assumptions about the dimension  of the sequentially-observed data $X_n^{(j)}$, which may themselves be vectors of varying size, nor about the dependence structure of within-stream data $X_n^{(j)}, X_{n'}^{(j)}$ or between-stream data $X_n^{(j)}, X_{n'}^{(j')}$ ($j\ne j'$). Assume that for each data stream~$j=1,\ldots,k$, there is a parameter vector~$\theta^{(j)}\in\Theta^{(j)}$ governing that stream $X_1^{(j)}, X_2^{(j)},\ldots$, and it is desired to test  a hypothesis $H^{(j)}\subseteq \Theta^{(j)}$ about $\theta^{(j)}$, with $H^{(j)}$ considered true if $\theta^{(j)}\in H^{(j)}$, and false otherwise.  The global parameter $\theta=(\theta^{(1)},\ldots,\theta^{(k)})$ is the concatenation of the individual parameters and is contained in the global parameter space $\Theta=\Theta^{(1)}\times\cdots\times \Theta^{(k)}$. Each $\theta\in\Theta$ indexes a probability measure $P_{\theta}$. With $\mH=\{H^{(1)},\ldots,H^{(k)}\}$ denoting the set of hypotheses to be tested, given $\theta=(\theta^{(1)},\ldots,\theta^{(k)})\in\Theta$ we let 
$$\mathcal{T}(\theta) = \{ H^{(j)} \in \mH: \theta^{(j)} \in H^{(j)}\}$$ denote the collection of true hypotheses when $P_{\theta}$ is the underlying probability measure, and 
\begin{equation}\label{F}
\mF(\theta) = \{ H^{(j)} \in \mH: \theta^{(j)} \notin H^{(j)}\}=\mH\setminus \mathcal{T}(\theta) 
\end{equation}
the false hypotheses. The familywise error rate is the probability of rejecting any true hypothesis,
\begin{equation}\label{FWEI}
\mbox{FWER}=\mbox{FWER}(\theta)=P_\theta(\mbox{any $H^{(j)}\in\mathcal{T}(\theta)$ rejected}).
\end{equation} In what follows, we will frequently drop the argument $\theta$ from these expressions for brevity.

At any point during sampling we shall refer to the \emph{active} hypotheses as those that have not yet been accepted or rejected, and \emph{active} data streams as those corresponding to active hypotheses.  A \emph{sequential multiple testing procedure} for the data streams~\eqref{streams} is simply a sampling and decision procedure that maps the current data from all the data streams and the list of active hypotheses to one of the following:
\begin{enumerate}[(a)]
\item A list of one or more active hypotheses to reject;
\item A list of one or more active hypotheses to accept;
\item\label{add.samp} An additional sample size to draw from each active data stream  before reevaluation.
\end{enumerate}
We note that the additional sample size in (\ref{add.samp}) can be 1, as in full sequential sampling. As a simplistic example, suppose there are $k=2$ data streams~\eqref{streams} and it is desired to test the respective hypotheses $H^{(1)}$ and $H^{(2)}$. A multiple testing procedure may first decide to sample 10 observations from the streams, yielding
\begin{align*}
&X_1^{(1)},X_2^{(1)}\ldots,X_{10}^{(1)}\\
&X_1^{(2)},X_2^{(2)}\ldots,X_{10}^{(2)}.
\end{align*} Based on this data the procedure may decide to reject $H^{(2)}$  and then sample a single additional observation from stream~1 (the lone remaining active stream), yielding
\begin{align*}
&X_1^{(1)},X_2^{(1)}\ldots,X_{10}^{(1)},X_{11}^{(1)}\\
&X_1^{(2)},X_2^{(2)}\ldots,X_{10}^{(2)}.
\end{align*} At this point the procedure may decide not to accept or reject $H^{(1)}$ but rather sample an additional 7 observations from stream~1, yielding
\begin{align*}
&X_1^{(1)},X_2^{(1)}\ldots,X_{10}^{(1)},  \ldots,X_{18}^{(1)}\\
&X_1^{(2)},X_2^{(2)}\ldots,X_{10}^{(2)},
\end{align*}  at which time the procedure may decide to accept $H^{(1)}$. Examples of sequential multiple testing procedures will be given in Section~\ref{sec:apps}, below.

Our main result, given in Theorem~\ref{thm:rej.princ}, extends a result of
\citet{Goeman10} for fixed sample size multiple testing procedures to the
sequential setting (i.e., for testing on data streams) in which sequential
sampling may occur between acceptances/rejections of hypotheses. Given any sequential
multiple testing procedure meeting the above general definition, its
rejection behavior (and hence its FWER, as we will see) can be described
by its \emph{rejection function} which we denote by $\rho$, which is a
possibly random function mapping the set $\mR \subseteq\mH$ of already rejected hypotheses, the set $\mA \subseteq\mH$ (disjoint from $\mR$) of already accepted hypotheses, the current sample size $n\in\mN$, and the set $\mD_n$ of all data available at time~$n$ to a set $\rho(\mR,\mA,n,\mD_n)\subseteq\mH\setminus(\mR\cup\mA)$ of hypotheses to reject. Here $\mN$ is the set
of all possible streamwise sample sizes of the procedure. Since the last argument~$\mD_n$ of $\rho(\mR,\mA,n,\mD_n)$ will always be the available data at time~$n$, in what follows we denote $\rho(\mR,\mA,n,\mD_n)$ simply by $\rho(\mR,\mA,n)$. Letting $\varnothing$ denote the empty set, the
value $\rho(\mR,\mA,n)= \varnothing$ indicates that either additional
sampling will be performed or that the testing procedure is terminated,
which occurs if $n=\max\mN$ or $\mR\cup\mA=\mH$. By convention define
$\rho(\mR, \mA,\infty) = \varnothing$. Therefore, iterations of $\rho$
describe the procedure's successive rejections of hypotheses,
and we will keep track of all the hypotheses that have been rejected after
each iteration of $\rho$ in sets $\mR_i\subseteq\mH$, and the hypotheses
that have been accepted after each iteration in $\mA_i\subseteq\mH$. To
this end, define $\mR_0 = \varnothing$, $n_0=0$, and
\begin{multline}\label{rej.func} \mR_{i+1} = \mR_{i} \cup \rho(\mR_{i},
\mA_{i},n_{i+1})\qmq{for $i=0,1,2,\dots$, where}\\ n_{i+1} =
\inf\left\{n \in \mN, n \geq n_{i} : \rho(\mR_{i}, \mA_{i},n) \neq
\varnothing\right\}, \end{multline} letting $\inf\varnothing=\infty$ as
usual. In what follows we do not need to define the sets~$\mA_i$ of
accepted hypotheses explicitly as we did for the $\mR_i$ in
\eqref{rej.func}, we only assume that the $\mA_i$ are defined in some way
such that $\varnothing=\mA_0\subseteq\mA_1\subseteq\ldots$ and $\mA_i\cap
\mR_i=\varnothing$ for all $i$. There are at most $k$ nontrivial
iterations of $\rho$ in the sense that $\rho\ne \varnothing$, by virtue of
the fact that there are $k$ hypotheses and hence at most $k$ hypotheses
that could be rejected.  Consequently, $\mR_k=\mR_{k+1}=\ldots$ and this
common set is the totality of all hypotheses rejected by the procedure,
and FWER$(\theta)$ can thus be written $P_\theta(\mR_k \not\subseteq
\mF(\theta))$.

The following theorem gives a rejection principle for sequential tests and establishes its sufficiency for FWER control.

\begin{theorem}\label{thm:rej.princ}
Let $\theta \in \Theta$ denote the true value of the global parameter,  $\alpha\in(0,1)$, and $\mH$ and $\mR_k$ as defined above. If $\rho$ and $\mN$ are the rejection function and sample size set, respectively, of a sequential multiple testing procedure such that
\begin{enumerate}
\item for any subsets $\mR, \mR', \mA$ of $\mH$ with $\mR \subseteq \mR'$ and $\mA\cap \mR=\varnothing$, and any $n \in \mN$, we have \begin{equation}
\label{monotonicity}
\rho(\mR, \mA, n) \subseteq \rho(\mR', \varnothing, n) \cup \mR'
\end{equation}
with $P_\theta$-probability 1, and 
\item
\begin{equation}
\label{singlestep}
P_\theta(\rho(\mF(\theta),\varnothing, n) \subseteq \mF(\theta) \text{ for all } n \in \mN) \geq 1-\alpha,
\end{equation}
\end{enumerate}
then \begin{equation}\label{princ.FWE}
P_\theta(\mR_k \not\subseteq \mF(\theta)) \le\alpha,
\end{equation}i.e., FWER$(\theta)$ is no greater than $\alpha$.
\end{theorem}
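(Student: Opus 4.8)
The plan is to isolate a single high-probability event---precisely the one whose probability is lower-bounded in condition~\eqref{singlestep}---on which the successive rejected sets $\mR_i$ never escape the collection $\mF=\mF(\theta)$ of false hypotheses, and then to bound the FWER by the complement of that event. Accordingly I would first define
\[ A = \{\rho(\mF,\varnothing,n)\subseteq \mF \text{ for all } n\in\mN\}, \]
so that \eqref{singlestep} reads $P_\theta(A)\ge 1-\alpha$. The heart of the argument is the claim that, on $A$, every iterate stays within the false hypotheses, i.e.\ $\mR_i\subseteq\mF$ for all $i$. Granting this, on $A$ we have $\mR_k\subseteq\mF$, hence $\{\mR_k\not\subseteq\mF\}\subseteq A^c$, and therefore $P_\theta(\mR_k\not\subseteq\mF)\le P_\theta(A^c)\le\alpha$, which is exactly \eqref{princ.FWE}.

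I would establish the claim by induction on $i$. The base case is immediate since $\mR_0=\varnothing\subseteq\mF$. For the inductive step, assume $\mR_i\subseteq\mF$ on $A$; by the recursion \eqref{rej.func} it then suffices to show $\rho(\mR_i,\mA_i,n_{i+1})\subseteq\mF$. Here I would invoke the monotonicity condition \eqref{monotonicity} with $\mR=\mR_i$, $\mR'=\mF$, and $\mA=\mA_i$, which is legitimate because the inductive hypothesis supplies $\mR_i\subseteq\mF$ and the standing assumption on the accepted sets supplies $\mA_i\cap\mR_i=\varnothing$. Monotonicity yields $\rho(\mR_i,\mA_i,n_{i+1})\subseteq\rho(\mF,\varnothing,n_{i+1})\cup\mF$, and on $A$ the first term on the right lies in $\mF$, collapsing the union to $\mF$; together with $\mR_i\subseteq\mF$ this gives $\mR_{i+1}\subseteq\mF$ and closes the induction. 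The degenerate value $n_{i+1}=\infty$ is handled directly by the convention $\rho(\cdot,\cdot,\infty)=\varnothing\subseteq\mF$.

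The main obstacle I anticipate is not the set-theoretic bookkeeping of the induction but the fact that \eqref{monotonicity} is asserted only with $P_\theta$-probability one \emph{for each fixed} $n$, whereas the induction must apply it at the data-dependent, and hence random, index $n_{i+1}$. I would resolve this by observing that $\mN$ is a countable set of streamwise sample sizes and $\mH$ is finite, so there are only countably many quadruples $(\mR,\mR',\mA,n)$ with $\mR\subseteq\mR'$, $\mA\cap\mR=\varnothing$, and $n\in\mN$ to consider; intersecting the corresponding probability-one events produces a single event $B$ with $P_\theta(B)=1$ on which \eqref{monotonicity} holds simultaneously for all of them, in particular at $n=n_{i+1}$. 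Running the induction on $A\cap B$ rather than on $A$ then gives $\mR_k\subseteq\mF$ on $A\cap B$.

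Finally, I would assemble the bound: since $P_\theta((A\cap B)^c)\le P_\theta(A^c)+P_\theta(B^c)=P_\theta(A^c)\le\alpha$ and $\{\mR_k\not\subseteq\mF\}\subseteq(A\cap B)^c$, the conclusion \eqref{princ.FWE} follows. A secondary point worth flagging in the write-up is that the accepted sets $\mA_i$ enter only through the disjointness $\mA_i\cap\mR_i=\varnothing$; this is precisely why the theorem need not prescribe how the $\mA_i$ are generated, and it is the mechanism by which the sequential accept/reject/continue trichotomy is accommodated without disturbing the Goeman--Solari-style argument.
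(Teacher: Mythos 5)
Your proposal is correct and follows essentially the same route as the paper's proof: the same induction showing that on the event $\{\rho(\mF,\varnothing,n)\subseteq\mF \text{ for all } n\in\mN\}$ every iterate satisfies $\mR_i\subseteq\mF$, with \eqref{monotonicity} applied at $(\mR_i,\mF,\mA_i,n_{i+1})$ and the conclusion read off from \eqref{singlestep} at $i=k$. Your extra step of intersecting the countably many $P_\theta$-null exceptional sets from \eqref{monotonicity} into a single almost-sure event $B$, so that monotonicity may be invoked at the random index $n_{i+1}$ and at the random sets $\mR_i,\mA_i$, is a point the paper leaves implicit, and is a welcome refinement rather than a departure.
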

\begin{proof}
Let $\mF=\mF(\theta)$, $V=\{\rho(\mF,\varnothing,n) \subseteq \mF \text{ for all } n \in \mN\}$, and $W_i=\{\mR_i\subseteq\mF\}$, $i=0,1,\ldots$. We will prove by induction that, as events, $V\subseteq W_i$ for all $i\ge 0$; the result~\eqref{princ.FWE} then follows from the $i=k$ case and \eqref{singlestep}.  The $i=0$ case is trivial since $\mR_0=\varnothing \subseteq \mF$. Suppose that $V\subseteq W_i$.  Then on $V$ we have 
\begin{align*}
\mR_{i+1} &= \mR_i\cup \rho(\mR_i, \mA_i,n_{i+1})\qm{[by \eqref{rej.func}]}\\
&\subseteq \mR_i\cup \left(\rho(\mF, \varnothing,n_{i+1}) \cup \mF\right)\qm{[by \eqref{monotonicity} and the inductive hypothesis]}\\
&=\rho(\mF, \varnothing,n_{i+1}) \cup \mF \qm{[by the inductive hypothesis]}\\
&= \mF.
\end{align*}
\end{proof}

The rejection principle for fixed sample size procedures presented in \citet{Goeman10} can be regarded as a special case of Theorem~\ref{thm:rej.princ}, since all fixed sample size procedures are sequential procedures with the fixed sample size being the only element of $\mN$.

\section{Applications of this rejection principle}\label{sec:apps}
In this section we apply the rejection principle in 
Theorem~\ref{thm:rej.princ}  in a number of settings, some with special structure and some without.

\subsection{Testing hypotheses without assumed special structure}\label{sec:no.struc}
\subsubsection{A sequential step-down procedure}\label{sec:BL}
\citet{Bartroff10e} proposed a sequential multiple testing procedure, extending  Holm's~\citeyearpar{Holm79} fixed sample size step-down procedure, that controls FWER regardless of between-stream dependence, requiring only that each hypothesis have a sequential test statistic which marginally controls the conventional type~I error probability. After briefly introducing Bartroff and Lai's procedure, we show that its error control is a special case of Theorem~\ref{thm:rej.princ}.

Here we present the Bartroff-Lai procedure in slightly more generality than in their original paper.  In particular, here we remove the need for (a) common critical values among the $k$ sequential test statistics by using standardizing functions, below, introduced by \citet{Bartroff14b},
and (b) critical values for all possible significance levels; here we only need critical values corresponding to certain fractions of the desired FWER bound~$\alpha$. Given a set~$\mN$ of possible per-stream sample sizes, assume that, for each $j=1,\ldots,k$, associated with the $j$th hypothesis~$H^{(j)}$ and data stream~$X_1^{(j)}, X_2^{(j)},\ldots$ is a scalar-valued sequential test statistic~$T_n^{(j)}=T_n^{(j)}(X_1^{(j)}, \ldots,X_n^{(j)})$ with $k$ critical values $B_1^{(j)}\ge\ldots\ge B_k^{(j)}$ such that 
\begin{equation}
\label{criticalvalue}
P_{\theta^{(j)}}\left( T_n^{(j)}\ge B_s^{(j)}\qm{for some $n \in \mN$}\right)\le\frac{\alpha}{k-s+1}\qmq{for all} \theta^{(j)} \in H^{(j)},
\end{equation} for all $s=1,\ldots,k$. The inequality \eqref{criticalvalue} just says that the  sequential test that stops and rejects $H^{(j)}$ at the first $n\in \mN$ such that $T_n^{(j)}\ge B_s^{(j)}$, and accepts $H^{(j)}$ otherwise, has type~I error probability $\alpha/(k-s+1)$.  For $j=1,\ldots,k$ define the \emph{standardizing function} 
\begin{equation}\label{std.func}
\vphi^{(j)}(x)=\begin{cases}
x-B_k^{(j)}+1,&\mbox{for}\;x\le B_k^{(j)}\\
\frac{x-B_s^{(j)}}{B_{s-1}^{(j)}-B_s^{(j)}}+k-s+1,&\mbox{for}\;B_s^{(j)}\le x\le B_{s-1}^{(j)}\;\mbox{if}\;B_{s-1}^{(j)}>B_s^{(j)},\quad 1<s\le k\\
x-B_1^{(j)}+k,&\mbox{for}\;x\ge B_1^{(j)},
\end{cases}
\end{equation}
which is an increasing, piecewise-linear  function such that $\vphi^{(j)}(B_s^{(j)})=k-s+1$ for $s=1,\ldots,k$, and thus
\begin{equation}\label{Tiffphi}
T_n^{(j)}\ge B_s^{(j)}\quad\Leftrightarrow\quad \vphi^{(j)}(T_n^{(j)})\ge k-s+1.
\end{equation}
 The standardizing functions will be applied to the test statistics before ranking them and they allow us to compare the test statistics $T_n^{(1)},\ldots,T_n^{(k)}$, which may be on different scales. In general, the standardizing function can be any increasing function such that $\vphi^{(j)}(B_s^{(j)})$ does not depend on $j$. To use a different standardizing function, all that would need to be adjusted in what follows is the right hand side of the inequality in \eqref{BL.mi}, below.

Letting $\mI_1=\{1, 2, \dots, k\}$, $r_1=0$, and $n_0=0$, the $i$th stage ($i=1, \dots, k$) of the Bartroff-Lai procedure proceeds as follows. 
\begin{enumerate}
\item\label{BL.step1} Sample each active data stream $\{X_n^{(j)}\}_{j\in \mI_i}$ up to sample size
$$n_i=\inf\left\{n \in \mN : n > n_{i-1}\qmq{and} T_n^{(j)} \ge B_{r_i+1}^{(j)} \qmq{for some} j\in \mI_i\right\}.$$
\item With $\vphi^{(j)}$ given by \eqref{std.func}, standardize and order the active test statistics $\wtilde{T}^{(j)}_{n_i}=\vphi^{(j)}(T^{(j)}_{n_i})$, $j\in\mI_i$, as follows:
$$\wtilde{T}_{n_i}^{(j(i,1))}\ge \wtilde{T}_{n_i}^{(j(i,2))}\ge\ldots\ge \wtilde{T}_{n_i}^{(j(i,|\mI_i|))}.$$
\item Reject $H^{(j(i,1))}, H^{(j(i,2))},\ldots, H^{(j(i,m_i))}$, where
\begin{equation}\label{BL.mi}
m_i = \min\left\{m\geq 1: \wtilde{T}_{n_i}^{(j(i,m+1))}< k-r_i-m \right\}.
\end{equation}
\item\label{BL.step4} If $i=k$ or $n_i =\max \mN$, stop and accept all remaining active hypotheses. Otherwise, let $\mI_{i+1}$ be the indices of the remaining hypotheses, set $r_{i+1}=r_i+m_i$, and continue on to stage $i+1$.
\end{enumerate}
An important caveat is that, at any point, any of the active hypotheses may be accepted without violating the FWER control proved below, as long as the set~$\mI_i$ of active hypotheses  is appropriately updated.  To maintain generality, here we do not specify an acceptance rule for the Bartroff-Lai procedure.  Sequential multiple testing procedures with explicit acceptance, as well as rejection, rules are considered in Section~\ref{sec:I&II}, which control both the type~I and II FWERs, the latter defined there.

Because the Bartroff-Lai procedure presented here is slightly more general than the one proved to control FWER in \citet[][Theorem~2.1]{Bartroff10e}, we record this procedure's FWER control in Corollary~\ref{thm:BL} which we prove by applying the rejection principle in Theorem~\ref{thm:rej.princ}.

\begin{corollary}\label{thm:BL}
If \eqref{criticalvalue} holds then the procedure defined above in steps~\ref{BL.step1}-\ref{BL.step4} satisfies FWER$(\theta)\le\alpha$ for all $\theta\in\Theta$.
\end{corollary}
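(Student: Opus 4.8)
The plan is to apply the rejection principle of Theorem~\ref{thm:rej.princ}: I will identify the rejection function $\rho$ and sample-size set $\mN$ of the Bartroff--Lai procedure and then verify the two hypotheses \eqref{monotonicity} and \eqref{singlestep}. The natural rejection function is the one reproducing steps~\ref{BL.step1}--\ref{BL.step4}: given already-rejected $\mR$, already-accepted $\mA$, and sample size $n$, set $r=\abs{\mR}$, let the active hypotheses be those in $\mH\setminus(\mR\cup\mA)$, and---provided the largest of the active standardized statistics $\wtilde T_n^{(j)}=\vphi^{(j)}(T_n^{(j)})$ reaches $k-r$---reject the top block produced by the step-down rule \eqref{BL.mi} with $r_i$ replaced by $r$; otherwise $\rho(\mR,\mA,n)=\varnothing$. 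Using \eqref{Tiffphi} one checks that the iteration \eqref{rej.func} driven by this $\rho$ reproduces the stagewise sampling times $n_i$ and rejection counts $m_i$, so that $\mR_k$ is exactly the set of hypotheses the procedure rejects.

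For condition \eqref{singlestep}, the easier part, I would take $\mR=\mF$ and $\mA=\varnothing$, so that $r=\abs{\mF}$ and every active hypothesis lies in $\mT=\mH\setminus\mF$, i.e.\ is true. Since any nonempty $\rho(\mF,\varnothing,n)$ is then a subset of the active (true) hypotheses and hence disjoint from $\mF$, the event $\{\rho(\mF,\varnothing,n)\subseteq\mF$ for all $n\}$ fails only if some active statistic reaches the first-rejection threshold $k-\abs{\mF}$, which by \eqref{Tiffphi} means $T_n^{(j)}\ge B_{\abs{\mF}+1}^{(j)}$ for some $H^{(j)}\in\mT$ and some $n\in\mN$. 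The critical-value property \eqref{criticalvalue} with $s=\abs{\mF}+1$ bounds each such marginal probability by $\alpha/(k-\abs{\mF})$, and since there are exactly $\abs{\mT}=k-\abs{\mF}$ true hypotheses, a union bound yields \eqref{singlestep}.

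The main obstacle is the monotonicity condition \eqref{monotonicity}: for $\mR\subseteq\mR'$ with $\mA\cap\mR=\varnothing$, I must show that every hypothesis in $\rho(\mR,\mA,n)$ either already lies in $\mR'$ or is rejected by $\rho(\mR',\varnothing,n)$. The plan is to recast the step-down rule in a counting form: writing $N_{\mathcal B}(v)$ for the number of hypotheses in an active set $\mathcal B$ whose standardized statistic is $\ge v$, a hypothesis with standardized value $t$ is rejected under offset $r$ precisely when $N_{\mathcal B}(v)\ge k-r-v+1$ for all integers $v\in[t,k-r]$. For a fixed $H^{(j)}\in\rho(\mR,\mA,n)$ with $H^{(j)}\notin\mR'$, I would compare the active sets $\mH\setminus(\mR\cup\mA)$ and $\mH\setminus\mR'$ through the identity
\begin{equation*}
N_{\mH\setminus\mR'}(v)=N_{\mH\setminus(\mR\cup\mA)}(v)-\abs{\{H:\wtilde T\ge v\}\cap(\mR'\setminus\mR)}+\abs{\{H:\wtilde T\ge v\}\cap\mA}.
\end{equation*}
Because the accepted hypotheses only add back to the count, passing from $(\mR,\mA)$ to $(\mR',\varnothing)$ lowers each count by at most $\abs{\mR'\setminus\mR}=r'-r$; since raising the offset from $r$ to $r'$ simultaneously loosens every threshold by exactly $r'-r$ (and shrinks the range of thresholds to be checked), the inequality $N_{\mH\setminus\mR'}(v)\ge k-r'-v+1$ follows on the relevant range, so $H^{(j)}\in\rho(\mR',\varnothing,n)$.

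The care in this last step lies in establishing the equivalence between the step-down rule \eqref{BL.mi} and the counting criterion, handling ties among the $\wtilde T_n^{(j)}$ and the edge case where the top statistic strictly exceeds $k-r$; the two inequalities $\abs{\{\cdots\}\cap(\mR'\setminus\mR)}\le r'-r$ and $\abs{\{\cdots\}\cap\mA}\ge 0$ are precisely what make the active-set change and the offset change cancel, which is the crux of \eqref{monotonicity}.
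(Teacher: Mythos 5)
Your proposal is correct, and its verification of \eqref{singlestep} is essentially identical to the paper's (a Bonferroni bound over the $\abs{\mT}=k-\abs{\mF}$ true hypotheses, using \eqref{Tiffphi} to translate $\wtilde{T}_n^{(j)}\ge k-\abs{\mF}$ into $T_n^{(j)}\ge B_{\abs{\mF}+1}^{(j)}$ and then \eqref{criticalvalue} with $s=\abs{\mF}+1$). Where you genuinely diverge is in the choice of rejection function, and this changes the character of the monotonicity step. The paper takes $\rho$ to be the single-threshold, one-layer function \eqref{rho.BL}: reject every active hypothesis with $\wtilde{T}_n^{(j)}\ge k-\abs{\mR}$, with no step-down block built in. The full block \eqref{BL.mi} is then recovered for free by the iteration \eqref{rej.func}: since $n_{i+1}\ge n_i$ is non-strict, $\rho$ may be applied repeatedly at the \emph{same} sample size, and each rejected layer enlarges $\mR$, lowers the threshold $k-\abs{\mR}$ by the number just rejected, and the iteration stalls exactly at the index $m_i$ of \eqref{BL.mi}. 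With that representation, \eqref{monotonicity} is a one-line inclusion --- $\mR\subseteq\mR'$ gives $k-\abs{\mR}\ge k-\abs{\mR'}$, and deleting $\mA$ only enlarges the candidate set --- which is precisely the paper's three-line argument. You instead fold the entire block \eqref{BL.mi} into one application of $\rho$, which makes the correspondence between $\rho$-iterations and the procedure's stages immediate, but forces you to prove \eqref{monotonicity} through the counting characterization; the paper needs no such lemma. Your counting argument itself is sound: the set identity holds even when $\mA\cap\mR'\ne\varnothing$ (such hypotheses contribute $-1+1=0$ to the right-hand side), and the two inequalities you isolate exactly offset the loosening of the required counts from $k-r-v+1$ to $k-r'-v+1$ on the shrunken range of thresholds.

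The details you deferred do close, including the one a reader might doubt: the closed endpoint $v=t$ in your criterion. One could worry about a statistic exactly equal to the final step-down threshold, where $N(t)=k-r-t$ would violate your requirement $N(t)\ge k-r-t+1$ even though the hypothesis is rejected; but that configuration is vacuous, because rejecting a hypothesis whose value equals the current threshold strictly increases the count and pushes the final threshold strictly below $t$. Concretely, if the iteration $c_{m+1}=N(k-r-c_m)$ first stabilizes at step $j$ (so $c_{j-1}<c_j=c^*=N(k-r-c_{j-1})$), a hypothesis with value exactly $k-r-c^*$ is excluded from $N(k-r-c_{j-1})$ but included in $N(k-r-c^*)$, forcing $c^*=N(k-r-c^*)\ge c^*+1$, a contradiction. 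One further point to keep explicit: the gate ``top active standardized statistic reaches $k-r$'' must be part of $\rho$ --- as you have it --- since \eqref{BL.mi} by itself always rejects at least one hypothesis; in the procedure that gate is supplied by the sampling rule in step~\ref{BL.step1}. So your route is complete and correct; it is simply a longer road than the paper's, which buys its trivial monotonicity proof by letting \eqref{rej.func} iterate the layer-at-a-time function \eqref{rho.BL} at a fixed $n$.
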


\begin{proof}It is not hard to see that the rejection function of the above procedure is given by
\begin{equation}\label{rho.BL}
\rho(\mR,\mA,n)=\left\{H^{(j)}\in\mH\setminus(\mR\cup\mA):\quad \wtilde{T}_n^{(j)}\ge k-|\mR| \right\},
\end{equation} about which we verify \eqref{monotonicity} and \eqref{singlestep}. For \eqref{monotonicity}, given $\mR, \mR', \mA$ as described there and $n\in\mN$, if  $H^{(j)}\in\rho(\mR,\mA,n)\setminus\mR'$ then $\wtilde{T}_n^{(j)}\ge k-|\mR|\ge k-|\mR'|$ since $\mR\subseteq\mR'$,  hence $H^{(j)}\in\rho(\mR',\varnothing,n)$ so $\rho$ satisfies \eqref{monotonicity}. For \eqref{singlestep}, without loss of generality assume that $\mT\ne\varnothing$ since the following probability is zero otherwise. Let $V_j=\{\wtilde{T}_n^{(j)}\ge k-|\mF|\;\mbox{for some $n\in\mN$} \}$. Using the Bonferroni inequality, \eqref{Tiffphi}, and \eqref{criticalvalue},
\begin{multline*}
P_\theta(\rho(\mF,\varnothing,n)\not\subseteq\mF\qm{for some $n\in\mN$})
=P_\theta\left(\bigcup_{j:\,H^{(j)}\in\mT}V_j\right) \le \sum_{j:\,H^{(j)}\in\mT}P_{\theta^{(j)}}\left(V_j\right)\\
 = \sum_{j:\,H^{(j)}\in\mT}P_{\theta^{(j)}}\left(T_n^{(j)}\ge B_{|\mF|+1}^{(j)}\;\mbox{for some $n\in\mN$}\right) \le \sum_{j:\,H^{(j)}\in\mT}\frac{\alpha}{k-|\mF|}  =|\mT|\cdot\frac{\alpha}{|\mT|}
=\alpha.
\end{multline*}
\end{proof}

\subsubsection{Tests that simultaneously control type~I and II FWERs}\label{sec:I&II}

Extending the procedure in the previous section, \citet{Bartroff14b} proposed a sequential test that simultaneously controls both the type~I and II FWERs, the latter defined below in \eqref{FWEII} analogously to the type~I version~\eqref{FWEI}.  The error control of this procedure can also be seen as a special case of the rejection principle. Adding to the setup in Section~\ref{sec:theorem}, suppose one also has alternative hypotheses $G^{(1)},\ldots,G^{(k)}$ such that $G^{(j)}\subseteq\Theta^{(j)}$ and $G^{(j)}\cap H^{(j)}=\varnothing$ for all $j=1,\ldots,k$. With this we redefine the false hypotheses from \eqref{F} to be 
\begin{equation*}
\mF(\theta) = \{ H^{(j)} \in \mH: \theta^{(j)} \in G^{(j)}\}
\end{equation*} and define the type~II FWER as 
\begin{equation}\label{FWEII}
\mbox{FWER}_{II}(\theta)=P_\theta(\mbox{any $H^{(j)}\in\mathcal{F}(\theta)$ accepted}).
\end{equation} 

Given desired FWER bounds $\alpha$ and $\beta$, the procedure requires only that each data stream $X_1^{(j)}, X_2^{(j)},\ldots$ has a scalar-valued sequential test statistic~$T_n^{(j)}=T_n^{(j)}(X_1^{(j)}, \ldots,X_n^{(j)})$ with critical values $A_1^{(j)},\ldots, A_k^{(j)}, B_1^{(j)}, \ldots, B_k^{(j)}$ such that
\begin{align}
P_{\theta^{(j)}}(T_n^{(j)}\ge B_s^{(j)}\;\mbox{some $n$,}\; T_{n'}^{(j)}>A_1^{(j)}\;\mbox{all $n'<n$})&\le \frac{\alpha}{k-s+1}\qmq{for all}\theta^{(j)}\in H^{(j)}\label{typeI}\\
 P_{\theta^{(j)}}(T_n^{(j)}\le A_s^{(j)}\;\mbox{some $n$,}\; T_{n'}^{(j)}<B_1^{(j)}\;\mbox{all $n'<n$})&\le \frac{\beta}{k-s+1}\qmq{for all}\theta^{(j)}\in G^{(j)}\label{typeII}
\end{align} for all $j, s=1,\ldots,k$. These inequalities simply guarantee that each sequential test marginally controls the conventional type~I and II error probabilities at desired fractions of $\alpha, \beta$.

For brevity we do not restate Bartroff and Song's
\citeyearpar{Bartroff14b} procedure here, but rather just say that it 
has a similar flavor to the one in Section~\ref{sec:BL} but is more 
complex in that it interweaves rejections
and acceptances of the $H^{(j)}$ at each stage. It also utilizes a
standardizing function, mapping $B_s^{(j)}$ to $k-s+1$ as above in
\eqref{std.func}, and mapping $A_s^{(j)}$ to $-(k-s+1)$. The procedure
controls the type~I and II FWERs, regardless of dependence between the
data streams, as long as \eqref{typeI}-\eqref{typeII} hold. This can be
easily proved using the rejection principle, whose application
here is interesting because it is used to prove control of type~II FWER as
well as type~I. The proof proceeds by defining the procedure's \textit{acceptance
function} $\wtilde{\rho}(\mR,\mA,n)$, analogous to the rejection
function~$\rho$ in Section~\ref{sec:theorem}, and these two are alternated
to give the procedure's accept/reject decisions.  Then
Theorem~\ref{thm:rej.princ} is applied to both $\rho$ and $\wtilde{\rho}$
separately to prove type~I and II FWER control, respectively. For this
procedure, $\rho$ takes the same form~\eqref{rho.BL} and the acceptance
function is similar, \begin{equation*}
\wtilde{\rho}(\mR,\mA,n)=\left\{H^{(j)}\in\mH\setminus(\mR\cup\mA):\quad
\wtilde{T}_n^{(j)}\le -(k-|\mA|) \right\}. \end{equation*}

\subsection{Testing hypotheses with special structure}\label{sec:struct}
Whereas  the previous sections assumed no special structure of the hypotheses being tested, in some settings logical relationships or priorities exist among the hypotheses which can be exploited by testing the hypotheses in a certain order and allow less stringent (i.e., more powerful) tests to be used. In this section we consider sequentially testing hypotheses in order \citep{Rosenbaum08}, and later the  special case of sequentially testing closed hypotheses \citep{Marcus76}.

\subsubsection{Testing hypotheses in order}\label{sec:IO} In many multiple
testing situations it is natural to only test a certain hypothesis if
certain other hypotheses have already been rejected; two real examples are
given in Section~\ref{sec:ex}. \citet{Rosenbaum08} considered various
ordering schemes and gave fixed sample size tests which control the FWER.  The
most general ordering scheme \citet{Rosenbaum08} considers is the
following, although his results apply to hypotheses and partitions with
more general (e.g., infinite) index sets, whereas here we simply consider
hypotheses indexed by $\{1,\ldots,k\}$ for coherence with the previous
sections. Let $\mH_1,\ldots,\mH_s$ be a partition of $\mH$ such that it is
desired to only test the hypotheses in $\mH_i$ if all the hypotheses in
$\bigcup_{i'<i}\mH_{i'}$ have already been rejected. Recall that
$\mH_1,\ldots,\mH_s$ being a partition of $\mH$ means that the $\mH_i$ are disjoint and their union is $\mH$. For $j=1,\ldots,k$, let $i_j$ denote the unique index $i$ of the
$\mH_i$ containing $H^{(j)}$, i.e., $H^{(j)}\in\mH_{i_j}$.
\citet{Rosenbaum08} calls a subset~$\mH'\subseteq\mH$ \textit{exclusive}
if at most one hypothesis $H^{(j)}\in\mH'$ is true, and $\mH_1, \cdots,
\mH_s$ is \textit{sequentially\footnote[3]{Here, to be faithful to Rosenbaum's \citeyearpar{Rosenbaum08} terminology, we slightly abuse our stated convention by allowing the ``sequential'' in sequentially exclusive to refer to this stepwise condition on the partition, not the sequential nature of the data.} exclusive} if all the hypotheses in
$\bigcup_{i'<i}\mH_{i'}$ being false implies that $\mH_i$ is exclusive,
for all $i=1,\ldots,s$. In the fixed sample size setting with valid $p$-values
$p^{(1)},\ldots,p^{(k)}$ for testing $H^{(1)},\ldots,H^{(k)}$,
respectively, \citet[][Proposition~3]{Rosenbaum08} shows\footnote[4]{Rosenbaum's \citeyearpar{Rosenbaum08} requirement that
the $\mH_i$ be ``intervals'' is not needed if one does not require that
the hypotheses $H^{(1)},\ldots, H^{(k)}$ be strictly tested in the indexed
order, since the hypotheses can simply be re-indexed within each subset
$\mH_i$ to make it an interval.} that if $\mH_1, \cdots, \mH_s$ are
sequentially exclusive, then the following test controls the FWER at level
$\alpha$: Reject $H^{(j)}$ if and only if $p^{(j)}\le\alpha$ and all
hypotheses in $\bigcup_{i<i_j}\mH_{i}$ have already been rejected.

Here we present a sequential multiple testing procedure for testing hypotheses in order, and use the rejection principle in Theorem~\ref{thm:rej.princ} to prove its FWER control. We adopt the notation for data streams, parameters, and hypotheses given in Section~\ref{sec:theorem}, and we assume that there is a sequentially exclusive partition $\mH_1, \cdots, \mH_s$ of $\mH$ representing the desired order of testing. Given a desired  FWER bound~$\alpha$ and a set~$\mN$ of possible streamwise sample sizes, we also assume that, for each $j=1,\ldots,k$, associated with the data stream~$X_1^{(j)}, X_2^{(j)},\ldots$ and hypothesis~$H^{(j)}$ is a scalar-valued sequential test statistic~$T_n^{(j)}=T_n^{(j)}(X_1^{(j)}, \ldots,X_n^{(j)})$ with a critical value $B^{(j)}$ satisfying 
\begin{equation}\label{typeI.IO}
P_{\theta^{(j)}}\left( T_n^{(j)}\ge B^{(j)}\qm{for some $n \in \mN$}\right)\le \alpha\qmq{for all}\theta^{(j)} \in H^{(j)}.
\end{equation} Note that here we only need a single critical value~$B^{(j)}$ for each test statistic rather than the $k$ critical values needed in the more general, unstructured setup of Section~\ref{sec:BL}, which our exploitation of the sequential exclusivity property here will allow us to sidestep.

Let $\mI_1=\{1,\ldots,k\}$, $\ell_1=1$ and $n_0=0$. The $i$th stage ($i=1,\ldots,k$) of the sequential procedure for testing hypotheses in order proceeds as follows.
\begin{enumerate}
\item\label{IO.step1} Sample each active data stream $\{X_n^{(j)}\}_{j\in \mI_i}$ up to sample size
$$n_i=\inf\left\{n \in \mN : n > n_{i-1} \qmq{and}T_n^{(j)} \ge B^{(j)} \qmq{for some} j: H^{(j)}\in \mH_{\ell_i}\right\}.
$$
\item Reject $H^{(j)} \in \mH_{\ell_i}$ if all hypotheses in $\bigcup_{\ell'<\ell_i}\mH_{\ell'}$ have been rejected and $T_{n_i}^{(j)}\ge B^{(j)}$.
\item\label{IO.step3} If $i=k$ or $n_i =\max \mN$, stop and accept all remaining hypotheses. Otherwise, set
\begin{align*}
\ell_{i+1}&=\begin{cases}
\ell_i+1, & \mbox{if all hypotheses in $\mH_{\ell_i}$ have been rejected,}\\
\ell_i, &\mbox{otherwise},
\end{cases}\\
\mI_{i+1}&=\left\{j: H^{(j)}\in \bigcup_{\ell'\ge \ell_{i+1}}\mH_{\ell'}\qm{and $H^{(j)}$ has not been rejected}\right\},
\end{align*} and continue on to stage $i+1$.
\end{enumerate}
The FWER control of this procedure is easily established using the rejection principle of Theorem~\ref{thm:rej.princ}.
\begin{corollary}
\label{order}
If $\mH_1,\ldots,\mH_s$ is a sequentially exclusive partition of $\mH$ and \eqref{typeI.IO} holds, then the procedure defined above in steps~\ref{IO.step1}-\ref{IO.step3} satisfies FWER$(\theta)\le\alpha$ for all $\theta\in\Theta$.
\end{corollary}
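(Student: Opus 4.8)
The plan is to follow the template of Corollary~\ref{thm:BL}: first identify the rejection function $\rho$ of the testing-in-order procedure, then verify the monotonicity condition \eqref{monotonicity} and the single-step condition \eqref{singlestep}, and finally invoke Theorem~\ref{thm:rej.princ}. The key structural observation is that, given a set $\mR$ of already-rejected hypotheses, the procedure is working on exactly one partition block, namely the lowest-indexed block that is not yet fully rejected; writing $\ell(\mR)=\min\{\ell:\mH_\ell\not\subseteq\mR\}$ for the index of this block, one sees that the procedure rejects precisely the active hypotheses in $\mH_{\ell(\mR)}$ whose test statistic has crossed its critical value. Thus I would take
$$\rho(\mR,\mA,n)=\left\{H^{(j)}\in\mH_{\ell(\mR)}\setminus(\mR\cup\mA):\ T_n^{(j)}\ge B^{(j)}\right\},$$
with $\rho(\mR,\mA,n)=\varnothing$ when $\mR=\mH$. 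That this $\rho$ correctly describes the rejections in steps~\ref{IO.step1}--\ref{IO.step3} follows because $\ell(\varnothing)=1=\ell_1$ and the stage index $\ell_i$ advances to $\ell_i+1$ exactly when block $\mH_{\ell_i}$ becomes fully rejected, mirroring the behavior of $\ell(\mR)$.

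For the monotonicity condition \eqref{monotonicity}, take $\mR\subseteq\mR'$ with $\mA\cap\mR=\varnothing$ and $n\in\mN$, and suppose $H^{(j)}\in\rho(\mR,\mA,n)\setminus\mR'$. The crux is to show $\ell(\mR')=\ell(\mR)$: since $\mR\subseteq\mR'$, every block fully rejected under $\mR$ is fully rejected under $\mR'$, giving $\ell(\mR')\ge\ell(\mR)$; conversely $H^{(j)}\in\mH_{\ell(\mR)}$ with $H^{(j)}\notin\mR'$ shows $\mH_{\ell(\mR)}\not\subseteq\mR'$, forcing $\ell(\mR')\le\ell(\mR)$. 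Hence $H^{(j)}\in\mH_{\ell(\mR')}\setminus\mR'$ with $T_n^{(j)}\ge B^{(j)}$, i.e., $H^{(j)}\in\rho(\mR',\varnothing,n)$, which gives \eqref{monotonicity}.

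The single-step condition \eqref{singlestep} is where the structural hypothesis does the real work, and I expect this to be the main point of the argument. Here $\rho(\mF,\varnothing,n)\not\subseteq\mF$ means that some true hypothesis is rejected, and the only candidates are the true hypotheses lying in the active block $\mH_{\ell(\mF)}$. By definition of $\ell(\mF)$, every block below it is contained in $\mF$, i.e., consists entirely of false hypotheses; sequential exclusivity of $\mH_1,\ldots,\mH_s$ then forces $\mH_{\ell(\mF)}$ to be exclusive, so it contains at most one true hypothesis---and since $\mH_{\ell(\mF)}\not\subseteq\mF$ it contains exactly one, say $H^{(j^*)}$ (the case $\mF=\mH$ is trivial, the probability being $1$). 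Consequently $\{\rho(\mF,\varnothing,n)\not\subseteq\mF\text{ for some }n\in\mN\}$ equals $\{T_n^{(j^*)}\ge B^{(j^*)}\text{ for some }n\in\mN\}$, an event depending only on stream $j^*$, whose probability is at most $\alpha$ by \eqref{typeI.IO} since $H^{(j^*)}$ is true. This yields \eqref{singlestep}, and Theorem~\ref{thm:rej.princ} then gives FWER$(\theta)\le\alpha$.

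The contrast with Corollary~\ref{thm:BL} is instructive and explains why a single critical value $B^{(j)}$ at level $\alpha$ suffices here rather than the $k$ graded values of Section~\ref{sec:BL}: sequential exclusivity collapses the set of rejectable true hypotheses at any stage to a single one, so the Bonferroni union bound of the unstructured case is replaced by control of a single stream's error probability. The only genuine obstacle is making this reduction airtight in the single-step step---verifying that exclusivity applies to precisely the active block $\mH_{\ell(\mF)}$ and that no true hypothesis outside this block can ever be rejected---after which the bound is immediate.
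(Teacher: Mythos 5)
Your proposal is correct and takes essentially the same route as the paper's proof of Corollary~\ref{order}: your rejection function, written via $\ell(\mR)=\min\{\ell:\mH_\ell\not\subseteq\mR\}$, is equivalent to the paper's formulation (which requires $\mH_\ell\subseteq\mR$ for all $\ell<i_j$, forcing $i_j=\ell(\mR)$), and your verifications of \eqref{monotonicity} and \eqref{singlestep} mirror the paper's, including the key step of using sequential exclusivity to reduce the bad event to the single true hypothesis $H^{(j^*)}$ in the first block not contained in $\mF$ and then applying \eqref{typeI.IO}.
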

\begin{proof} The rejection function is given by
\begin{equation*}\label{rej.func.IO}
\rho(\mR,\mA,n)=\{H^{(j)}\in\mH\setminus(\mR\cup\mA): T_n^{(j)} \ge  B^{(j)}\qmq{and}\mH_\ell\subseteq\mR\qmq{for all} \ell<i_j\},
\end{equation*}
about which we verify \eqref{monotonicity} and \eqref{singlestep}.
For \eqref{monotonicity}, with $\mR,\mR',\mA$ as described there and $n\in \mN$, if $H^{(j)}\in\rho(\mR,\mA,n)\setminus\mR'$ then all conditions for $H^{(j)}$ to be in $\rho(\mR',\varnothing,n)$ are satisfied, the latter since $\mH_\ell\subseteq\mR\subseteq \mR'$ for all $\ell<i_j$. For \eqref{singlestep}, without loss of generality assume $\mT\ne\varnothing$ and let $\ell^*$ be the smallest index of a subset $\mH_{\ell^*}$ containing a true hypothesis, i.e., $\ell^*=\min\{\ell: \mH_\ell\cap\mT\ne\varnothing\}$, and let $H^{(j^*)}$ be an arbitrarily chosen but fixed hypothesis in $\mH_{\ell^*}\cap\mT$. On $V:=\{\rho(\mF,\varnothing,n)\not\subseteq\mF \;\mbox{for some}\;n\in \mN\}$, there is some true $H^{(j)}\in\rho(\mF,\varnothing,n)$ with $T_n^{(j)}\ge B^{(j)}$ and $\mH_\ell\subseteq\mF$ for all $\ell<i_j$. It follows from the latter that $i_j=\ell^*$ and by this and sequential exclusivity, $j=j^*$. Using these facts and \eqref{typeI.IO} we have
$$P_\theta(V)\le P_{\theta^{(j^*)}}\left(T_n^{(j^*)}\ge B^{(j^*)}\;\mbox{for some $n\in \mN$}\right)\le\alpha,$$ showing that $\rho$ satisfies \eqref{singlestep}.
\end{proof}

\subsubsection{Closed Testing}\label{sec:closed}
A frequently encountered special case of testing hypotheses in order is 
closed testing.  The set of hypotheses $\mH=\{H^{(1)},\ldots,H^{(k)}\}$ is 
\emph{closed} if it is closed under intersection.  \citet{Marcus76} 
introduced a fixed sample size method of testing a closed set~$\mH$ that 
controls the FWER and only requires a level-$\alpha$ test of each 
intersection hypothesis $\bigcap_{j\in J} H^{(j)}$, $J\subseteq 
\{1,\ldots,k\}$. Beginning with the \emph{global hypothesis} $\bigcap_{j=1}^k H^{(j)}$, their procedure tests the elements of $\mH$ in order of decreasing \textit{dimension} (the maximum number of~$H^{(j)}$ being intersected), and $H\in\mH$ is tested if and only if all elements of $\mH$ contained in $H$ have been rejected.  Fixed sample size closed testing is a special case of Rosenbaum's \citeyearpar{Rosenbaum08} testing in order formulation.

In the sequential realm, \citet{Tang99} gave a group sequential procedure for closed testing of hypotheses about multivariate normal data.  A more general sequential procedure for closed testing can be derived using the rejection principle via the sequential procedure in Section~\ref{sec:IO} and Corollary~\ref{order}. The relevant partition of $\mH$ is the following, defined inductively for $i=1,\ldots,k$:
\begin{equation}\label{IO.part}
\mH_i=\left\{H=\bigcap_{j\in J} H^{(j)}: |J|=k-i+1,\quad H\not\in \mH_{i'}\qm{any $i'<i$}\right\}.
\end{equation} 
The subset $\mH_i$ contains all hypotheses of dimension $k-i+1$, as is guaranteed by the last condition in \eqref{IO.part}. For example, $\mH_1$ contains only the global hypothesis, $\mH_2$ contains all intersections of dimension $k-1$, and so on.  Applying the sequential procedure in Section~\ref{sec:IO} to this partition results in a sequential procedure that tests the hypotheses in order of decreasing dimension, using a level-$\alpha$ test for each, with sampling of the active data streams occurring between rejection decisions. After establishing that the partition~\eqref{IO.part} is sequentially exclusive, it follows immediately from Corollary~\ref{order} that this procedure controls the FWER.

\begin{corollary}\label{cor:closed}
If \eqref{typeI.IO} holds and $\mH$ is closed, then the partition~\eqref{IO.part} is sequentially exclusive, hence the procedure defined in steps~\ref{IO.step1}-\ref{IO.step3} of Section~\ref{sec:IO} applied to \eqref{IO.part} satisfies FWER$(\theta)\le\alpha$ for all $\theta\in\Theta$.
\end{corollary}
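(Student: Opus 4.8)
The plan is to reduce everything to Corollary~\ref{order}: since \eqref{typeI.IO} is assumed by hypothesis, the only thing that remains to be verified is that the partition~\eqref{IO.part} is sequentially exclusive, after which FWER control for the closed family (with its partition \eqref{IO.part}) is immediate from that corollary. So the entire content of the proof is the verification of sequential exclusivity, and I would organize it around a single union-closure argument.

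First I would fix notation for the truth of intersection hypotheses. For $J\subseteq\{1,\ldots,k\}$ write $H_J=\bigcap_{j\in J}H^{(j)}\in\mH$, and let $T=\{j:\theta^{(j)}\in H^{(j)}\}$ be the index set of the true individual hypotheses. Then $H_J$ is true (i.e.\ $\theta\in H_J$) if and only if $J\subseteq T$, and this is the only property of truth the argument will use. I would also record the structural fact built into \eqref{IO.part}: $\mH_i$ consists exactly of the elements of $\mH$ whose \emph{dimension} equals $k-i+1$, where the dimension of $H\in\mH$ is the largest $|J|$ for which $H=\bigcap_{j\in J}H^{(j)}$; in particular every $H\in\mH_i$ admits a representation $H=H_J$ with $|J|=k-i+1$, and no larger one (the ``$H\notin\mH_{i'}$'' clause rules that out).

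The main step is the exclusivity check. Fix $i$, put $d=k-i+1$, and assume every hypothesis in $\bigcup_{i'<i}\mH_{i'}$ is false; note this set is precisely the hypotheses of dimension $\geq d+1$. Suppose toward a contradiction that two distinct members $H_{J_1},H_{J_2}\in\mH_i$ are both true, with representations chosen so that $|J_1|=|J_2|=d$. Truth gives $J_1\subseteq T$ and $J_2\subseteq T$, hence $J_1\cup J_2\subseteq T$, so $H_{J_1\cup J_2}$ is true; moreover $H_{J_1\cup J_2}\in\mH$ because $\mH$ is closed under intersection. Since $H_{J_1}\neq H_{J_2}$ forces $J_1\neq J_2$, two distinct $d$-element sets have $|J_1\cup J_2|\geq d+1$, so $\dim H_{J_1\cup J_2}\geq d+1$ and therefore $H_{J_1\cup J_2}\in\mH_{i'}$ for some $i'<i$. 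This contradicts the assumption that all such hypotheses are false, so at most one member of $\mH_i$ is true; that is, $\mH_i$ is exclusive. As $i$ was arbitrary, \eqref{IO.part} is sequentially exclusive, and Corollary~\ref{order} then delivers FWER$(\theta)\le\alpha$.

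The step I expect to require the most care is the dimension bookkeeping, i.e.\ making sure the ``escalated'' hypothesis $H_{J_1\cup J_2}$ genuinely lands in an earlier block of the partition. This hinges on two points I would state explicitly: that the dimension of a closed hypothesis is the \emph{maximum} intersection size (so $\dim H_{J_1\cup J_2}\geq|J_1\cup J_2|$, placing it in block $i'=k-\dim+1\le i-1$), and that distinct index sets of equal size $d$ have union of size at least $d+1$ (which in turn relies on distinctness of the hypotheses as sets forcing distinctness of their chosen index sets). Beyond this, the only thing to watch are the degenerate boundary cases — an empty block $\mH_i$, or $\mT=\varnothing$ — but in each of these exclusivity is trivial and can be dispatched in a sentence.
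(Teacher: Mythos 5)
Your proof is correct and takes essentially the same route as the paper's: both reduce the claim to Corollary~\ref{order} and verify sequential exclusivity by observing that two distinct true hypotheses in $\mH_i$ would have a true intersection which, by closedness of $\mH$ and a dimension count, lies in some earlier block $\mH_{i'}$ with $i'<i$, contradicting the assumed falsity of all hypotheses in earlier blocks. Your version merely makes explicit the index-set bookkeeping (distinct $d$-sets have $|J_1\cup J_2|\ge d+1$, and dimension is the \emph{maximal} representation size) that the paper compresses into the phrase ``by virtue of $H, H'$ being distinct.''
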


\begin{proof} To establish sequential exclusivity, suppose there are distinct hypotheses $H, H'\in\mH_i$ that are both true, i.e., $\theta\in H$ and $\theta\in H'$. Then $\theta\in H\cap H'$ so $H\cap H'$ is true, and $H\cap H'\in\mH_{i'}$ for some $i'<i$ by virtue of $H, H'$ being distinct.
\end{proof}

A similar but distinct formulation of sequentially testing closed hypotheses is given in \citet[][Theorem~2.2]{Bartroff10e}, which does not explicitly force testing in order of decreasing dimension, but rather gives a sufficient condition on the test statistics under which the procedure in Section~\ref{sec:BL} would test in this order anyway.

\section{Examples}\label{sec:ex}
In this section we give two examples of the sequential procedures in Section~\ref{sec:struct} applied to real testing situations.
In Section~\ref{sec:IOex} we apply the sequential procedure for testing in order to an observational study involving chromosome aberration data, and in Section~\ref{sec:closed.ex} we apply the sequential closed testing procedure to estimate the maximum safe dose of a treatment. In both cases the performance of the sequential procedure is compared with the corresponding fixed sample size procedure and the efficiency gain, in terms of savings in  average sample size, of the sequential procedure is highlighted.

\subsection{Chromosome aberrations of patients exposed to anti-tuberculosis drugs}\label{sec:IOex}

In non-randomized testing situations, such as observational studies, it is common for treatment responses to be compared with more than one ``control'' response, such as baseline and non-treatment, since this can provide information on differences due to nonrandom treatment assignment \citep[e.g., see][Section~8]{Rosenbaum02}.

\citet{Masjedi00} studied possible mutagenic effects of anti-tuberculosis drugs by comparing the frequency of chromosome aberrations including gaps per 100 cells in an observational study of $n=36$ patients before (denoted $b$) and after (denoted $a$) the treatment, and 36 healthy controls (denoted $c$), who matched the treatment group by sex and age and were selected from relatives of the treatment group when possible. The response triples $(y_{ci},y_{bi},y_{ai})$, $i=1,\ldots,n$, are given in Table~\ref{table 1}, where larger numbers indicate more chromosome damage. 

\begin{table}[htdp]
\caption{\citet{Masjedi00} data on total chromosome aberrations per 100 cells including gaps. }
\begin{center}
\begin{tabular}{|c|c|c|}
\hline
Control~$y_{ci}$&Before treatment~$y_{bi}$&After treatment~$y_{ai}$\\
\hline
1.00&0.50&3.00\\
1.50&4.50&5.50\\
0.50&3.50&5.00\\
0.50&2,66&3.33\\
0.66&1.50&4.50\\
1.00&5.00&7.00\\
1.00&1.33&5.33\\
0.66&1.50&2.50\\
0.00&2.00&5.33\\
1.33&1.50&3.00\\
1.50&1.33&3.33\\
2.00&2.00&2.00\\
1.33&2.00&4.66\\
0.00&2.66&10.00\\
3.00&1.33&3.33\\
0.50&3.50&5.00\\
0.66&3.00&5.00\\
1.33&2.66&3.33\\
3.00&0.00&4.00\\
0.66&1.50&7.00\\
0.50&1.00&3.00\\
0.66&4.00&4.00\\
2.00&1.33&2.66\\
1.33&0.66&3.33\\
0.00&1.50&3.50\\
1.00&0.66&2.00\\
0.50&2.00&3.33\\
1.33&1.00&3.50\\
0.50&1.33&2.66\\
1.00&2.00&2.00\\
0.66&2.00&4.00\\
1.50&1.50&1.50\\
2.66&2.00&3.50\\
1.33&0.66&3.33\\
0.66&0.00&2.66\\
1.00&1.50&1.50\\
\hline
\end{tabular}
\end{center}
\label{table 1}
\end{table}

\citet{Rosenbaum08} considers the model
\begin{align*}
y_{ai}&=\mu_a+\pi_i+\lambda_i+\eps_i,\\
y_{bi}&=\mu_b+\pi_i+\lambda_i+\zeta_i,\\
y_{ci}&=\mu_c+\pi_i+\eta_i,
\end{align*}
$i=1, \dots, n$, where $\pi_i$, $\lambda_i$, $\eps_i$, $\zeta_i$, and $\eta_i$ are independent with continuous distributions $F_\pi$, $F_\lambda$, $F_\eps$, $F_\zeta$, and $F_\eta$, assumed to be symmetric about zero. Here $\mu$ represents the group effect, the random variables $\pi_i$ and $\lambda_i$ reflect the correlation due to matching and treatment vs.\ control, respectively, and $\eps_i$, $\zeta_i$ and $\eta_i$ are error terms. The primary scientific question was to ascertain whether the post-treatment responses exceed the baseline and control responses by more than the baseline and control responses differ from each other, i.e.,
\begin{equation*}
\mu_a-\max(\mu_b, \mu_c) > \max(\mu_b, \mu_c)-\min(\mu_b, \mu_c),
\end{equation*} hence the null hypothesis of primary interest is the negation of this,
\begin{equation*}
H_0: \mu_a-\max(\mu_b, \mu_c) \leq \max(\mu_b, \mu_c)-\min(\mu_b, \mu_c).
\end{equation*} However, even if $H_0$ cannot be rejected, it may be beneficial to be able to draw some weaker conclusions about the treatment. Namely,  \citet{Rosenbaum08} defines the five additional hypotheses:
\begin{align*}
&H_+: \mu_a \leq (\mu_b+\mu_c)/2,\\
&H_b: \mu_a \leq \mu_b,\\
&H_c: \mu_a \leq \mu_c,\\
&H_*: \mu_a-\mu_c \leq \mu_c-\mu_b,\\
&H_\sharp: \mu_a-\mu_b \leq \mu_b-\mu_c.
\end{align*}

The logical implications of these hypotheses suggest the sequentially exclusive partition $\mH_1=\{H_+\}$, $\mH_2=\{H_b, H_c\}$, $\mH_3=\{H_*, H_\sharp\}$, and $\mH_4=\{H_0\}$ of $\mH=\{H_0, H_+, H_b, H_c, H_*, H_\sharp\}$, and Figure~\ref{flowchart} describes the procedure for testing these hypotheses in order, which applies to both the fixed sample size procedure of \citet{Rosenbaum08} as well as the sequential procedure defined above  in which sampling may occur between decisions and Corollary~\ref{order} shows that the FWER is controlled. 

\begin{figure}[htbp]
\begin{center}
\small
{
\begin{tikzpicture}[node distance = 2cm, auto]
    % Place nodes
    \node [block](start) {Start};
    \node [block, below of=start] (first) {$H_+$ rejected at level $\alpha$?};
    \node [left of=first, node distance=4cm]{First hierarchy $\mH_1$};
    \node [block, below of=first] (second) {$H_b$ and $H_c$ \\rejected at level $\alpha$?};
    \node [left of=second, node distance=4cm]{Second hierarchy $\mH_2$};
    \node [block, right of=second, node distance=6cm] (stop) {Stop\\$H_0$ is not rejected};
    \node [block, below of=second] (third) {$H_*$ and $H_\sharp$ \\rejected at level $\alpha$?};
    \node [left of=third, node distance=4cm]{Third hierarchy $\mH_3$};
    \node [block, below of=third](final){$H_0$ is rejected};
    \node [left of=final, node distance=4cm]{Fourth hierarchy $\mH_4$};
       % Draw edges
    \path [line] (start) -- (first);
    \path [line] (first) -- node{no}(stop);
    \path [line](first)--node{yes}(second);
    \path [line] (second) -- node{no}(stop);
    \path [line](second)--node{yes}(third);
    \path [line](third)--node{no}(stop);
    \path [line](third)--node{yes}(final);
   \end{tikzpicture}
}
\end{center}
\caption{Flow chart of the testing procedure for the chromosome aberration example.\label{flowchart}}
\end{figure}
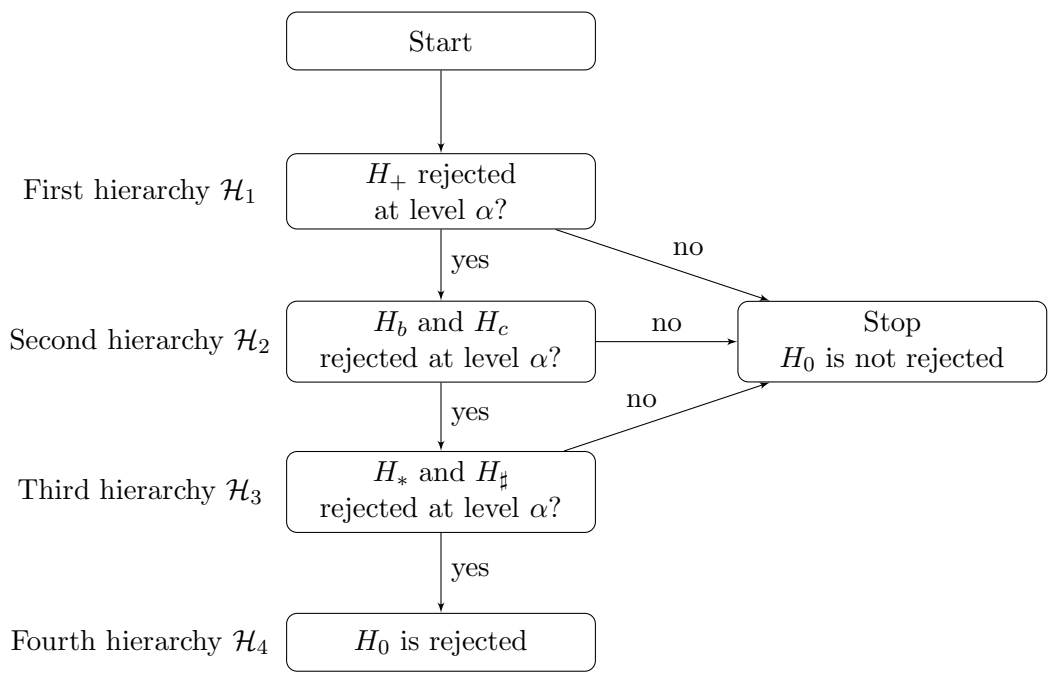

Because the \citet{Masjedi00} data $(y_{ai}, y_{bi}, y_{ci})$ exhibits strong non-normality, \citet{Rosenbaum08} applied the one-sided version of Wilcoxon's signed-rank test to obtain the fixed sample size  $p$-values, using $y_{ai}-(y_{bi}+y_{ci})/2$ to test $H_+$, $y_{ai}-y_{bi}$ to test $H_b$, and so on. Following this approach, we apply a  sequential version of these tests in which the sequential test statistics $T_n^{(j)}$ ($j=1,\ldots,6$) are taken to be the repeatedly-calculated $p$-values for Wilcoxon's test, with the common critical value $B=B^{(1)}=\ldots=B^{(6)}$ adjusted to control the marginal type~I error probability~\eqref{typeI.IO} at $\alpha=.05$, and was determined by Monte Carlo.  These sequential tests were then applied to the Masjedi et al.\ data in Table~\ref{table 1}, after setting aside the 5 tied observations for a total of 31 triples.  To assess the performance of this sequential test, Monte Carlo simulations were performed in which the order of the triples were permuted 50,000 times and the sample size needed to reach an accept/reject decision on all hypotheses was recorded each time.  In order to see the effect of different sequential sampling schemes, this was carried out for the five different choices of sample size sets~$\mN$ given in Table~\ref{table 3}: fully-sequential sampling in which $\mN=\{1,\ldots,31\}$, and four group-sequential schemes with 5, 4, 3, and 2 groups, respectively, with evenly sized groups (until the final group, which is 1 larger). Table~\ref{table 3} also reports the common critical value $B$ and the average total sample size over the 50,000 simulated paths for each scheme. In all cases, the average sample size was dramatically reduced from 31. Even in the worst case
of 2-stage sampling, with $\mN=\{15,31\}$, more than 9 observations were saved on average, nearly one third of the largest possible sample size, $31$. Interestingly, the 5-group scheme with $\mN=\{10,15,20,25,31\}$ has nearly the same average sample size as the fully-sequential scheme, which may be appealing in applications where fully sequential testing is not practical.

\begin{table}[htdp]
\caption{The sample size set~$\mN$, critical value~$B$, and average sample size of various Wilcoxon signed-rank tests for testing hypotheses about the chromosome aberration data.}
\begin{center}
\begin{tabular}{|c|c|c|}
\hline
$\mN$&$B$&Average sample size\\
\hline
$\{1, \dots, 31\}$&0.0031&18.9\\
$\{10,15,20,25,31\}$&0.0068&18.9\\
$\{10,20,25,31\}$&$0.0082$&20.1\\
$\{10,20,31\}$&0.0098&20.8\\
$\{15,31\}$&0.0140&21.3\\
\hline
\end{tabular}
\end{center}
\label{table 3}
\end{table}

\subsection{Identifying the maximum safe dose in toxicological studies} \label{sec:closed.ex}
\citet{Tamhane01} describe a novel multiple testing approach to determining the maximum safe dose (MAXSD) of crop protection products such as pesticides, herbicides, and fungicides, which are tested for safety on non-target species, and for which the multiple testing error control guarantees a prescribed bound on recommending an unsafely high dose; their approach is equally applicable to clinical trials for safety with human subjects.  In this section we apply the rejection principle developed above to derive a sequential version of this procedure.

Assume there are $k$ discrete nonzero dose levels which we label $1,\ldots,k$ in increasing order, and include the level $0$ to denote ``no treatment.'' Following \citet[][Section~3]{Tamhane01}, we adopt an ANOVA setup in which there are $k+1$ groups of subjects, each treated at one of these dose levels. Let $y_{ij}$ ($i=1,\ldots,n_i$, $j=0,\ldots,k$) denote the response of the $i$th subject in the $j$th group, and let $\mu_j=E(y_{ij})$ ($j=0,\ldots,k$). 
Large values of $y_{ij}$ indicate safety of the treatment.  For example, in the crop protection setting mentioned above, $y_{ij}$ may be the growth of a non-target organism when exposed to the potentially toxic  treatment. Define the hypotheses
$$H^{(j)}: \mu_j\le \lambda \mu_0\quad\text{ vs. }\quad G^{(j)}: \mu_j>\lambda \mu_0, \quad j=1, \dots, k,$$
in which $\lambda\in(0,1]$ is a fixed, agreed-upon response threshold for safety, and therefore the null hypothesis $H^{(j)}$ means that the $j$th dose is unsafe; the MAXSD is  defined as  the largest $j\in\{0,\ldots,k\}$ such that $H^{(j)}$ is false. \citet{Tamhane01} propose a multiple testing approach to encode the natural ordering $\mu_0\ge \mu_1\ge\ldots\ge \mu_k$, as follows. Replacing $H^{(j)}$ by $$\wtilde{H}^{(j)}=\bigcap_{j'=j}^k H^{(j')},$$ we see that $\wtilde{H}^{(1)},\ldots, \wtilde{H}^{(k)}$ form a closed family of hypotheses and the test for closed hypotheses in Section~\ref{sec:closed} and Corollary~\ref{cor:closed} can be used to test these hypotheses sequentially. Since this procedure tests the hypotheses in order of decreasing dimension, as discussed in Section~\ref{sec:closed}, the hypotheses will be tested in the order $\wtilde{H}^{(k)}, \wtilde{H}^{(k-1)},\ldots, \wtilde{H}^{(1)}$. Here we focus on sequential control of only type~I FWER but, alternatively, simultaneous control of type~I and II FWERs could  be considered using the test of \citet{Bartroff14b} discussed in Section~\ref{sec:I&II}.  

The following simulation study was performed to explore the operating characteristics of the sequential testing procedure. Setting $k=4$ and taking $y_{ij}$ to be i.i.d.\ $N(\mu_i,1)$ observations, the mean responses~$\mu_i$ were chosen to be those in the second column of Table~\ref{table 4}, making the true MAXSD 1, indicted by an asterisk in the table. This choice of mean responses, with $\mu_1=0$, represents the commonly encountered but confounding testing situation in which the smallest nonzero dose actually has mean response zero, but is also the correct dose; we further note that recommending dose level~0 as the MAXSD, although it has the same mean response as dose level~1, has much different and possibly dangerous implications for the utilization of the MAXSD in future scientific work.  An $\alpha=.05$ version of both the sequential test in Section~\ref{sec:closed} and the fixed sample size version was implemented and Table~\ref{table 4}  gives the estimated average sample size (denoted Avg.\ SS) of each group and the probability (denoted by  $P(\mbox{MAXSD}=j)$) of choosing each dose level as the estimated MAXSD for both the sequential and fixed sample size procedures, based on $50,000$ Monte Carlo simulated data sets. For both of these tests, standard two-sample $t$ tests were used as the individual test statistics to test the $H^{(j)}$, with $\lambda$ taken to be 1, and the critical values were determined by Monte Carlo in the sequential case. The maximum sample size of the sequential procedure was 50, to mirror the sample size of the fixed sample size procedure.  At the three dose levels~$j=2, 3, 4$ exceeding the MAXSD, the sequential procedure only required on average 28.6, 8.9, and 2.6 observations, respectively, a dramatic reduction from the fixed sample size procedure which used 50 observations at each of these dose levels. While dosing subjects at levels above the MAXSD may not be a concern in some studies involving plants, etc., it would be of chief concern in clinical trials with human patients who are likely to experience toxicity, or even possibly death, at those levels. Both the sequential and fixed sample size procedures correctly identified the true MAXSD more than 80\% of the time, with the sequential procedure less likely to identify the true MAXSD than the fixed sample size procedure. On the other hand, the fixed sample size procedure was more likely to underestimate the MAXSD (4.70\%) compared to the sequential procedure (1.57\%), which is also undesirable but for different reasons such as an ineffective crop protection plan being implemented or sick human patients receiving ineffective treatment.  The last line of the table contains a weighted average estimated MAXSD for both tests.

\begin{table}[htdp]
\caption{Performance of the sequential and fixed sample size procedures for identifying the MAXSD.}\label{table 4}
\begin{center}
\begin{tabular}{|c|c|c|c|c|c|}

\hline
	&	&\multicolumn{2}{|c|}{Sequential}				&\multicolumn{2}{|c|}{Fixed-sample}\\\hline
Level $=j$&$\mu_j$	&Avg.\ SS	&$P(\mbox{MAXSD}=j)$	&Avg.\ SS&$P(\mbox{MAXSD}=j)$\\
\hline
%$0^*$&-&-&0.0002&-&0.0023\\
0\;&0&50.0&1.57\%&50&4.70\%\\
1$^*$&0&49.7&82.43\%&50&89.68\%\\
2\;&0.5&28.6&16.00\%&50&5.62\%\\
3\;&1.0&8.9&0\%&50&0\%\\
4\;&2.0&2.6&0\%&50&0\%\\
\hline
\multicolumn{2}{|c|}{Weighed average MAXSD}&\multicolumn{2}{|c|}{1.1}&\multicolumn{2}{|c|}{1.0}\\
\hline
\end{tabular}
\end{center}
\end{table}

\section{Discussion}\label{sec:disc}
We have given sufficient conditions, in the form of \eqref{monotonicity}-\eqref{singlestep}, for a sequential procedure to control the FWER, and have shown that they can be applied to testing situations where special structure is assumed, or not assumed. In addition to the rejection principle's utility in deriving new sequential procedures, it also provides a unified view of sequential FWER control. Although it remains an open question whether this rejection principle is also a necessary condition for FWER control, in any case the principle may  be useful for developing optimality theory of sequential multiple testing procedures, about which  little is known, even in the case of independent data streams. It seems likely that finding the most sequentially efficient procedure satisfying \eqref{monotonicity}-\eqref{singlestep} may be more attainable than finding the best sequential procedure controlling the FWER.

We have not gone into detail about applying the general sequential procedures discussed in Sections~\ref{sec:BL} and \ref{sec:I&II}, and we refer interested readers to the respective references \citep{Bartroff10e,Bartroff14b} for details. We will say here that both of these procedures can handle testing, in a given data stream~$j$, the commonly-encountered hypotheses of the form
\begin{equation}\label{t=t0}
H^{(j)}: \theta^{(j)}=\theta_0^{(j)}\qmq{vs.} G^{(j)}: \theta^{(j)}\ne\theta_0^{(j)},
\end{equation}
where $\theta^{(j)}$ is a possibly vector-valued parameter, and $\theta_0^{(j)}$ is some fixed value  of interest.  For example, in a parametric setup,  sequential generalized log-likelihood ratio statistics can be used and signed-root normal approximations \citep{Jennison97} or Monte Carlo can be used to compute critical values; see \citet{Bartroff14b} for details. Which of the procedures, in Sections~\ref{sec:BL} or  \ref{sec:I&II}, to use in practice may depend on the application. The procedure in Section~\ref{sec:BL} does not explicitly specify an acceptance rule (although, as mentioned there,  acceptances can be incorporated without violating the FWER control), and it therefore may be more appropriate in situations where ``early stopping'' is not as high a priority under the null~$H^{(j)}$ as under $G^{(j)}$, such as when $H^{(j)}$ represents a drug being safe or a process being ``in control.''  On the other hand, if early stopping is desirable under both $H^{(j)}$ and $G^{(j)}$, then the procedure of \citet{Bartroff14b} discussed in Section~\ref{sec:I&II} that controls both the type~I and II FWERs may be more appropriate.  In order to control the type~II FWER, this procedure naturally requires  control of the marginal type~II error rate in the form of \eqref{typeII}, which may not be possible with $G^{(j)}$ as written in \eqref{t=t0}, because values of $\theta^{(j)}\in G^{(j)}$ arbitrarily close to $\theta_0^{(j)}$ can make bounds like \eqref{typeII} impossible for any test. This can be remedied by constructing a surrogate alternative hypothesis $\wtilde{G}^{(j)}$ under which type~II error control is possible, for example $\wtilde{G}^{(j)}: ||\theta^{(j)}-\theta_0^{(j)}||\ge \delta$ for some $\delta>0$ and norm $||\cdot||$.  Here $\delta$ may represent the minimum significant separation of the parameter, or similar, and be well-motivated by the domain of application. On the other hand, the statistician could treat $\delta$ as a parameter to choose before testing in order to attain a sequential procedure with desirable operating characteristics, such as 
expected sample size. A similar analysis pertains to null hypotheses of the form $H^{(j)}: \theta^{(j)}\le \theta_0^{(j)}$ for scalar-valued parameter~$\theta^{(j)}$ and, more generally, of the form $H^{(j)}: u(\theta^{(j)})\le u_0^{(j)}$ for vector-valued $\theta^{(j)}\in\mathbb{R}^d$ and given  smooth function $u: \mathbb{R}^d\To \mathbb{R}$ and fixed scalar value~$u_0$; see \citet{Bartroff08c}, \citet{Bartroff08}, and \citet{Bartroff13} for examples of sequential generalized likelihood ratio tests for these situations.

In addition to the optimality theory mentioned above, another area of further work is to generalize the sequential sampling schemes. Above, we assumed that the set of possible streamwise sample sizes~$\mN$ is fixed in advance, but an alternative approach is to incorporate an efficient adaptive scheme, wherein the next sampling increment can be chosen as a function of the data, making the resulting procedures ``adaptive'' in yet another sense.  Adaptive sequential sampling schemes for hypothesis testing have been considered by many authors including \citet{Jennison06a} and \citet{Bartroff06b, Bartroff06, Bartroff07c}. Incorporating adaptive sampling schemes such as these into the multiple testing procedures is an exciting area of future research.

\section*{Acknowledgements}\label{sec:Ack} Bartroff's work was partially supported by  grant DMS-1310127 from the National Science Foundation and grant R01 GM068968 from the National Institutes of Health. 

%\bibliographystyle{apalike}
%\bibliography{../Bib_files/bibliography}

\def\cprime{$'$}

\end{document}